\begin{document}

\title{Lightweight Fingerprints for Fast Approximate Keyword Matching Using Bitwise Operations}
\author{\texorpdfstring{Aleksander~Cis{\l}ak$^1$~and~Szymon Grabowski$^2$}{Aleksander~Cis{\l}ak~and~Szymon~Grabowski}}

\institute{$^1$\,Warsaw University of Technology, Faculty of Mathematics and Information Science,\\
ul.~Koszykowa 75, 00--662 Warsaw, Poland\\
\email{a.cislak@mini.pw.edu.pl}\\
$^2$\,Lodz University of Technology, Institute of Applied Computer Science,\\
  Al.~Politechniki 11, 90--924 {\L}\'od\'z, Poland\\
  \email{sgrabow@kis.p.lodz.pl}\\
}

\maketitle

\begin{abstract}
We aim to speed up approximate keyword matching by storing a lightweight, fixed-size block of data for each string, called a fingerprint.
These work in a similar way to hash values; however, they can be also used for matching with errors.
They store information regarding symbol occurrences using individual bits, and they can be compared against each other with a constant number of bitwise operations.
In this way, certain strings can be deduced to be at least within the distance $k$ from each other (using Hamming or Levenshtein distance) without performing an explicit verification.
We show experimentally that for a preprocessed collection of strings, fingerprints can provide substantial speedups for $k = 1$, namely over $2.5$ times for the Hamming distance and over $10$ times for the Levenshtein distance.
Tests were conducted on synthetic and real-world English and URL data.
\end{abstract}

\section{Introduction}

This study is concerned with strings, that is, finite collections of symbols.
We assume that a string $S$ is $1$-indexed, i.e.~index $1$ refers to the first symbol $S[1]$, index $2$ refers to the second symbol $S[2]$, etc.
All strings are specified over the same alphabet $\Sigma$, with alphabet size $\sigma = |\Sigma|$.

Exact string comparison refers to checking whether two strings $S_1$ and $S_2$ of equal length $n$ have the same characters at all corresponding positions. 
The strings can store, e.g.,~natural language data or DNA sequences.
Assuming that each character occupies $1$ byte, calculation of such a comparison takes $O(n)$ time in the worst case; however, the average case is $O(1)$, using no additional memory.
Specifically, the complexity of the average case of comparing two strings depends on the alphabet size.
Assuming a uniformly random symbol distribution, the chance that first two symbols match (i.e.~that $S_1[1] = S_2[1]$) is equal to $1/\sigma$, the chance that both first and second symbol pairs match (i.e.~that $S_1[1] = S_2[1]$ and $S_1[2] = S_2[2]$) is equal to $1/\sigma^2$, etc.
More generally, the probability that there is a match between all characters up to a $1$-indexed position $i$ is equal to $1 / \sigma^i$.

Nonetheless, even when one can expect a fair degree of similarity between the strings, it is often faster to compare hash values for two strings (in constant time) and perform an explicit verification only when these hashes are equal to each other.
This is particularly true in a situation where one would compare a single string, that is a query pattern, against a preprocessed collection (dictionary) of strings.
The hash-based approach forms the basis of, e.g.,~the well-known \mbox{Rabin--Karp}~\cite{karp1987efficient} algorithm for online exact matching.

Aside from exact matching, there has been a substantial interest in \emph{approximate} string comparison, for instance for spelling suggestions or matching biological data~\cite{navarro2001guided, pollock1984automatic, myers1994sublinear}.
Approximate string matching defines whether two strings are equal according to a specified similarity metric, and the number of errors is denoted by $k$.
Two popular measures include $(i)$ the Hamming distance~\cite{hamming1950error} (later referred to as $Ham$), which defines the number of mismatching characters at corresponding positions between two strings of equal length, and $(ii)$ the Levenshtein distance~\cite{levenshtein1966binary} (also called edit distance, later referred to as $Lev$), which determines the minimum number of edits (insertions, deletions, and substitutions) required for transforming one string into another.

Hash values cannot be easily used in the approximate context.
This work is focused on approximate matching in practice, and hence we introduce the concept of lightweight \emph{fingerprints}, whose goal is to speed up approximate string comparison.
The speedup can be achieved for preprocessed collections of strings, at the cost of a fixed-sized amount of space per each word in the collection.
This means that we evaluate fingerprints for a keyword indexing problem, also known as dictionary (keyword) matching.
Specifically, in this setting a pattern $P$ is compared against a string collection $\mathcal{D} = [S_1, \dots, S_{|\mathcal{D}|}]$.

\section{Related work}

The original idea of a string fingerprint (which is referred to as a sketch in some articles) goes back to the work of Rabin and Karp~\cite{rabin1981fingerprinting, karp1987efficient}.
They used a variant of a hash function called a rolling hash, which can be quickly calculated for each successive substring of the input text, in order to speed up exact online substring matching.
This technique was later used also in the context of multiple pattern matching~\cite{MuthM96,SalmelaTK06} and matching over a 2D text~\cite{ZhuT89}.
Bille et al.~\cite{bille2013fingerprints} extended this idea and demonstrated how to construct fingerprints for substrings of a string which is compressed by a context-free grammar.
Policriti et al.~\cite{PolicritiTV12} generalized the classical Rabin--Karp algorithm in order to be used with the Hamming distance.

Fingerprints can be viewed as a method of compressing the text, nevertheless, they cannot replace the text -- rather, they can be used as additional information.
Bar-Yossef~et~al.~\cite{bar2004sketching}~show that it is not possible to use only a fingerprint in order to answer a matching query. 
Moreover, they prove that for answering Hamming distance queries, the size of the fingerprint is bounded by $\Omega(n/m)$, where $n$ is the text length and $m$ is the pattern length.

Prezza and Policriti~\cite{PolicritiP14} presented a related idea called de Bruijn hash function, where shifting the substring by one character results in a corresponding one-character shift in its hash value.
Grabowski and Raniszewski~\cite{GrabowskiR15} used fingerprints in order to speed up verifications of tentative matches in their SamSAMi (sampled suffix array with minimizers) full-text index.
Fingerprints, which are concatenations of selected bits taken from a short string, allow them to reject most candidate matches without accessing the indexed text and thus avoiding many cache misses.
Recently, fingerprints have been applied to the longest common extension (LCE) problem~\cite{prezza2016place}, allowing to solve the LCE queries in logarithmic time in essentially the same space as the input text (replacing the text with a data structure of the same size).

Ramaswamy et al.~\cite{ramaswamy2006approximate} described a technique called ``approximate'' fingerprinting; however, it refers to exact pattern matching with false positives rather than matching based on similarity metrics.
Fingerprints have also been used for matching at a larger scale, i.a.,~for determining similarity between audio recordings~\cite{CanoBKH05} and files~\cite{Manber94}.
The term fingerprint has also been used with a different meaning in the domain of string processing, where it refers to the set of distinct characters contained in one of the substrings of a given string, with the ongoing recent work, e.g.,~a study by Belazzougui et al.~\cite{BelazzouguiKR13}.

\section{Fingerprints}
\label{Sec:our_alg}

In this section we introduce the notion of a fingerprint, describe its construction and demonstrate how to compare two fingerprints.
For a given string $S$, a fingerprint $S^\prime$ is constructed as $S^\prime = f(S)$ using a function $f$ which returns a fixed-sized block of data.
In particular, for two strings $S_1$ and $S_2$, we would like to determine in certain cases that $Ham(S_1, S_2) \geqslant k$ or $Lev(S_1, S_2) \geqslant k$  by comparing only fingerprints $S^\prime_1$ and $S^\prime_2$ for a given $k \in \mathbb{N}_{>0}$.
In other words, fingerprints allow for quickly rejecting an approximate match up to $k$ errors between two strings.
When the fingerprint comparison is not decisive, we still have to perform an explicit verification on $S_1$ and $S_2$, but fingerprints allow for reducing the overall number of such operations.
There exists a similarity between fingerprints and hash functions; however, hash comparison works only in the context of exact matching.

As regards the complexity of a single verification (string comparison), the worst case is equal to $O(n)$ for the Hamming distance and $O(k \,\text{min}(|S_1|, |S_2|))$ for the Levenshtein distance, using Ukkonen's algorithm~\cite{ukkonen1985algorithms}.
Assuming a uniformly random alphabet distribution, the average case complexity is equal to $O(k)$ for both metrics.

In our proposal, fingerprints use individual bits in order to store information about symbol frequencies or positions in the string $S[1, n]$.
Let $\Sigma' \subseteq \Sigma$ be a subset of the original alphabet with $\sigma' = |\Sigma'|$ denoting its size.
We propose the following approaches:

\begin{itemize}
\item
\textbf{Occurrence} (\textbf{occ} in short): we store information in each bit that indicates whether a certain symbol from $\Sigma'_{occ}$ occurs in a string using $\sigma'_{occ}$ bits in total.

\item
\textbf{Occurrence halved}: the fingerprint refers to occurrences in the first and second halves of $S$, that is $S[1, \lfloor n/2 \rfloor]$ and $S[\lfloor n/2 \rfloor + 1, n]$, respectively.
We store information whether each of the $\sigma'_{occh}$ symbols occurs in the first half of $S$ using the first $\sigma'_{occh}$ bits of the fingerprint, and we store information whether each of the same $\sigma'_{occh}$ symbols occurs in the second half of $S$ using the second $\sigma'_{occh}$ bits of the fingerprint.
The occurrence halved scheme works only for the Hamming distance.

\item
\textbf{Count}: we store a count (i.e.~the number of occurrences) of each symbol using $b$ bits per symbol.
The count can be in the range $[0, 2^b - 1]$, where $2^b - 1$ indicates that there are $2^b - 1$ or more occurrences of a given symbol.
We use $\sigma'_{count}$ symbols from $\Sigma'_{count}$.

\item
\textbf{Position} (\textbf{pos} in short): we can encode information regarding the first (leftmost, i.e.~the one with the lowest index) position in $S$ of each symbol from $\Sigma'_{pos}$ using $p$ bits per symbol, where $p \leqslant \lceil \text{log}_2 n \rceil$.
This position can be in the range $[1, 2^p - 1]$ (encoded in the fingerprint as 0-indexed, i.e.~where index 0 refers to the first symbol, index 1 refers to the second symbol, etc), and the value of $2^p - 1$ indicates that the first occurrence is either at one of the positions from the range $[2^p, n]$ or the symbol does not occur in $S$ (we do not know which one is true).
We use $\sigma'_{pos} \cdot p$ bits in order to encode positions of $\sigma'_{pos}$ symbols.
The remaining bits, e.g.,~$1$ bit for $\sigma'_{pos} = 5$, $p = 3$ and $16$ bits per fingerprint, are used in order to store information about the occurrences of additional symbols, in the same fashion as in the occurrence fingerprint which was introduced previously.
The position-based scheme works only for the Hamming distance.

\end{itemize}

Fingerprints can be also differentiated based on the symbols which they refer to.
The choice of the specific symbol set is important when it comes to an empirical evaluation and it is discussed in more detail in Section~\ref{Sec:results}.
We have identified the following possibilities:

\begin{itemize}

\item
\textbf{Common}: a set of symbols which appear most commonly in a given collection.

\item
\textbf{Rare}: a set of symbols which appear least commonly in a given collection.

\item
\textbf{Mixed}: a mixed set where half of the symbols comes from the common set while the other half comes from the rare set.

\end{itemize}

\subsection{Fingerprint examples}

In the following examples, we constrain ourselves to the variant of 2-byte (16-bit) fingerprints with common letters.
Fingerprints could in principle have any size, and the longer the fingerprint, the more information we can store about the character distribution in the string.
Still, we regard 2 bytes, which correspond to the size of 2 characters in the original string, to be a desirable compromise between size and performance (consult the following section for experimental results).
The choice of common letters is arbitrary at this point and it only serves the purpose of idea illustration.

In the following examples, occurrence fingerprint is constructed using selected 16 most common letters of the English alphabet, namely $\{ \texttt{e}, \texttt{t}, \texttt{a}, \texttt{o}, \texttt{i}, \texttt{n}, \texttt{s}, \texttt{h}, \texttt{r}, \texttt{d},\\ \texttt{l}, \texttt{c}, \texttt{u}, \texttt{m}, \texttt{w}, \texttt{f} \}$ \cite[p.~36]{lewand2000cryptological}.
For the occurrence halved and count fingerprints (with $b = 2$ bits per count), we use the first 8 letters from this set.
In the case of a position fingerprint (with $p = 3$ bits per letter), we use the first 5 letters for storing their positions and the sixth letter \texttt{n} for the last (single) occurrence bit.

Each fingerprint type would be as follows for the word \texttt{instance} (spaces are added only for visual presentation):

\begin{itemize}
\item
\textbf{Occurrence}:

\vspace{0.2em}
\boxed{\texttt{1\,1\,1\,0\,1\,1\,1\,0\,0\,0\,0\,1\,0\,0\,0\,0}}
\vspace{0.2em}

The first (leftmost) bit corresponds to the occurrence of the letter \texttt{e} (which does occur in the word, hence it is set to \texttt{1}), the second bit corresponds to the occurrence of the letter \texttt{t}, etc.

\item
\textbf{Occurrence halved}:

\vspace{0.2em}
\boxed{\texttt{01~10~01~00~10~11~10~00}}
\vspace{0.2em}

The first (leftmost) bit corresponds to the occurrence of the letter \texttt{e} in the first half of the word, that is \texttt{inst}; the second bit corresponds to the occurrence of the letter \texttt{e} in the second half of the word, that is \texttt{ance}; the third bit corresponds to the occurrence of the letter \texttt{t} in the first half of the word, the fourth bit corresponds to the occurrence of the letter \texttt{t} in the second half of the word, etc.

\item
\textbf{Count}:

\vspace{0.2em}
\boxed{\texttt{01~01~01~00~01~10~01~00}}
\vspace{0.2em}

The first two (leftmost) bits correspond to the count of the letter \texttt{e} (it occurs once, hence the count is \texttt{01}, that is $1$), the second two bits correspond to the count of the letter \texttt{t} (it occurs once, hence the count is \texttt{01}, that is $1$), etc.

\item
\textbf{Position}:

\vspace{0.2em}
\boxed{\texttt{111 011 100 111 000 1}}
\vspace{0.2em}

The first three (leftmost) bits correspond to the position of the first occurrence of the letter \texttt{e} (this $0$-indexed position is equal to $7$, hence it is set to \text{111}), the second three bits correspond to the position of the first occurrence of the letter \texttt{t} (this $0$-indexed position is equal to $3$, hence it is set to \texttt{011}), etc.
The last (rightmost) occurrence bit indicates the occurrence of \texttt{n}, and since this letter does occur in the input string, this bit is set to \texttt{1}.

\end{itemize}

\subsection{Construction}

The construction of various fingerprint types is described below.
For the description of symbols and types, consult preceding subsections.
At the beginning, all bits for the fingerprint are always set to \texttt{0}.

\begin{itemize}
\item
\textbf{Occurrence}:
Let us remind the reader that the length of the fingerprint is equal to $\sigma_{occ}'$ for a selected alphabet $\Sigma_{occ}'$ of letters whose occurrences are stored.
A string is iterated characterwise.
For each character $c$, a mask \texttt{0x1} is shifted $q$ times to the left, where $q \in \{0, \dots, \sigma_{occ}' - 1 \}$ is a corresponding shift for the character $c$.
In other words, there exists a mapping $c \to q$ for each character $c \in \Sigma_{occ}'$.
A natural approach to this mapping is to take the position of a symbol in the alphabet $\Sigma_{occ}'$ (assuming that the alphabet is ordered).
The fingerprint is then \texttt{or}-ed with the mask in order to set the bit which corresponds to character $c$ to \texttt{1}.
For a string of length $n$, time complexity of this operation is equal to $O(n)$.

\item
\textbf{Occurrence halved}:
The fingerprint is constructed in an analogous way to the occurrence approach described above.
We start with iterating the first half of the string, setting corresponding bits depending on letter occurrences, and then we iterate the second half of the string, again setting corresponding bits, which are shifted by $1$ position with respect to bits set while iterating the first half of the string.
Character mapping is adapted accordingly.

\item
\textbf{Count}:
A string is again iterated characterwise.
The length of the fingerprint is equal to $b \cdot \sigma_{count}'$ for a selected alphabet $\Sigma_{count}'$ of letters whose counts are stored.
Similarly to the occurrence fingerprint, there exists a mapping $c \to q$ for each character $c \in \Sigma_{count}'$.
However, since we need $b$ bits in order to store a count, it holds that $q \in \{0, b, \dots, \sigma_{count}' - b \}$, assuming that $b$ divides $\sigma_{count}'$.
A selected bit mask is set and the fingerprint is then \texttt{or}-ed with the mask in order to increase the current count of character $c$ which is stored using $b$ bits at positions $\{q, q + 1, \dots, q + b - 1\}$.
For instance, for a practical case of $b = 2$, the count can be in the range $\{0, 1, 2, 3\}$.
In this case, we would proceed as follows: if the right bit is \texttt{0}, then we set it to \texttt{1} (either increasing the count from $0$ to $1$ or from $2$ to $3$).
Otherwise, if the left bit is set, we also set the right bit (increasing the count from $2$ to $3$), or otherwise we set the left bit and unset the right bit (increasing the count from $1$ to $2$).
All of the above steps can be realized with a few simple bitwise operations.
Assuming fixed $b$, for a string of length $n$, the time complexity of this operation is equal to $O(n)$.

\item
\textbf{Position}:
In the case of position fingerprints, the length of the fingerprint is equal to $\sigma_{pos}' \cdot p$ for a selected alphabet $\Sigma_{pos}'$ of letters whose positions are stored and a chosen constant $p$ which indicates the number of bits per position.
Here, we iterate the alphabet, and for each character $c \in \Sigma_{pos}'$ we search for the first (leftmost) occurrence of $c$ in the string.
Each position of such an occurrence is then successively encoded in the fingerprint, or the position $pos$ is set to all \texttt{1}s if $pos \geqslant 2^p - 1$.
For a string of length $n$, the time complexity of this operation is equal to $O(n \cdot \sigma_{pos}')$.

\end{itemize}

\subsection{Comparison}

We can quickly compare two \emph{occurrence} (or occurrence halved) fingerprints by performing a binary \texttt{xor} operation and counting the number of bits which are set in the result (that is, calculating the Hamming weight, $H_W$).
Let us note that $H_W$ can be determined in constant time using a lookup table with $2^{8|S^\prime|}$ entries, where $|S^\prime|$ is the fingerprint size in bytes.
We denote the fingerprint distance with $F_D$, and for occurrence fingerprints $F_D(S^\prime_1, S^\prime_2) = H_W(S^\prime_1 \oplus S^\prime_2)$.
In other words, we count the number of mismatching character occurrences which are stored in individual bits.

Let us however note that $F_D$ does not determine the true number of errors.
For instance for $S_1 = \texttt{run}$ and $S_2 = \texttt{ran}$, $F_D$ might be equal to 2 (occurrence differences for \texttt{a} and \texttt{u}) but there is still only one mismatch.
On the other extreme, for two strings of length $n$, where each string consists of a repeated occurrence of one different symbol, $F_D$ might be equal to 1 (or even 0, if the symbols are not included in the fingerprints), but the number of mismatches is $n$.
In general, $F_D$ can be used in order to provide a lower bound on the true number of errors, and the following relation holds (the right-hand side can be calculated quickly using a lookup table, since $0 \leqslant F_D \leqslant 8 |S^\prime|$):

\begin{equation} \label{eq:fingerprint}
D(S_1, S_2) \geqslant \lceil F_D(S^\prime_1, S^\prime_2) / 2 \rceil, D \in \{Ham, Lev\}
\end{equation}

This formula also holds for the \emph{count} fingerprint, although the true number of errors is underestimated even more in such a case, since we calculate the Hamming weight instead of comparing the counts.
For instance, for the count of $3$ (bits \texttt{11}) and the count of $1$ (bits \texttt{01}), the resulting difference is equal to $1$ instead of $2$.

As regards the \emph{position} fingerprint (which is relevant only to the Hamming metric), after calculating the \texttt{xor} value, we do not compute the Hamming weight, rather, we compare each set of bits ($p$-gram) which describes a single position.
The value of $F_D$ is equal to the number of mismatching $p$-grams.
Similarly to other fingerprint types, these values can be preprocessed and stored in a lookup table in order to reduce calculation time.

The relationship between the fingerprint error and the true number of errors is further explored in Theorem~\ref{th:fingerprint1} and Theorem~\ref{th:fingerprint2}.
In plain words, manipulating a single symbol in either string makes the fingerprint distance grow by at most $2$.
Let us note that Formula~\ref{eq:fingerprint} follows as a direct consequence of this statement, with the round-up on the right-hand side resulting from the fact that fingerprint distance might be odd.

\begin{theorem}
\label{th:fingerprint1}

Consider $\mathcal{F} = \{occ, count \}$ and assume a distance function $D \in \{Ham, Lev\}$.
For any two strings $S_1$ and $S_2$, with their fingerprints $S^\prime_1$ and $S^\prime_2$, respectively, and the fingerprint distance between them $F_D(S^\prime_1, S^\prime_2) = f(S^\prime_1, S^\prime_2)$, where $f \in \mathcal{F}$, we have that for any string $S_3$ such that $D(S_2, S_3) = 1$, the following relation holds: $F_D(S^\prime_1, S^\prime_3) \leqslant F_D(S^\prime_1, S^\prime_2) + 2$.

\end{theorem}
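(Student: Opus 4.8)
The plan is to route the whole argument through a triangle inequality, which is available because for every $f \in \mathcal{F} = \{occ, count\}$ the fingerprint distance is $F_D(X, Y) = H_W(X \oplus Y)$, i.e.\ the bitwise Hamming distance between the two fingerprints, and Hamming distance is a metric. Consequently $F_D(S^\prime_1, S^\prime_3) \leq F_D(S^\prime_1, S^\prime_2) + F_D(S^\prime_2, S^\prime_3)$, so it suffices to prove the purely local statement $F_D(S^\prime_2, S^\prime_3) \leq 2$ whenever $D(S_2, S_3) = 1$. This strips $S_1$ out of the problem entirely and reduces the theorem to understanding how a single edit perturbs a fingerprint.

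First I would classify the edit. For $D = Ham$ it is a substitution, and for $D = Lev$ it is a substitution, an insertion, or a deletion; in all cases it changes the recorded symbol statistics of $S_2$ in a tightly bounded way. A substitution removes one occurrence of a symbol $a$ and adds one occurrence of a symbol $b \neq a$, touching at most two per-symbol fields of the fingerprint, while an insertion or deletion alters the occurrence of a single symbol. For the \emph{occurrence} fingerprint this immediately finishes the proof: each symbol of $\Sigma'_{occ}$ owns exactly one presence bit, so a substitution can clear at most the bit of $a$ and set at most the bit of $b$ (an insertion or deletion flips at most one bit), giving $F_D(S^\prime_2, S^\prime_3) = H_W(S^\prime_2 \oplus S^\prime_3) \leq 2$.

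The \emph{count} fingerprint is where I expect the real work to be, and it is the step I would flag as the main obstacle. Here each tracked symbol occupies a $b$-bit saturating binary counter, and a change of $\pm 1$ in a count can flip more than one bit whenever it crosses a carry boundary (already for $b = 2$, the step from count $1$ to count $2$ turns \texttt{01} into \texttt{10} and flips both bits). The dangerous configuration is a substitution, which decrements one counter and increments another at the same time, so a bit-by-bit estimate of $H_W(S^\prime_2 \oplus S^\prime_3)$ threatens to reach $2b$ rather than $2$. My approach would be to account for the perturbation one counter block at a time and argue that each affected block should contribute only a single unit of fingerprint distance; making this charging rigorous -- and squaring it with the bit-level Hamming-weight definition of $F_D$, which is precisely where the slack captured by the round-up in Formula~\ref{eq:fingerprint} resides -- is the delicate part I would spend the most care on.
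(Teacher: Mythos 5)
Your reduction via the triangle inequality is sound and is, in substance, the same argument the paper makes: since $F_D(X,Y)=H_W(X\oplus Y)$ is a metric on fingerprints, it suffices to bound the perturbation caused by a single edit. The paper dresses this up as a walk $V_0=S_1,\dots,V_k=S_2$ and tracks how $F_D(S^\prime_1,V_j^\prime)$ changes at each step, with a four-way case analysis on whether $V_{j-1}[i_j]$ and $V_j[i_j]$ belong to $\Sigma'$, plus separate cases for insertions and deletions under $Lev$; your version strips out the chain and the reference string $S_1$ entirely, which is cleaner and avoids the paper's aside about deletions of symbols absent from $S_1$ never occurring in an optimal edit script. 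For $f=occ$ your argument is complete and matches the paper's in substance.

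The count case, which you flag as the delicate step and leave open, is a genuine gap --- and it is one you cannot close, because the claimed bound is false for count fingerprints under the paper's own comparison rule. The paper computes $F_D$ for count fingerprints as the bitwise Hamming weight of the xor (``we calculate the Hamming weight instead of comparing the counts''), with counts stored in standard binary. Take $b=2$, tracked symbols $\{\texttt{x},\texttt{y}\}$, $S_1=S_2=\texttt{xxy}$ and $S_3=\texttt{xyy}$, so that $Ham(S_2,S_3)=1$. The two counter blocks go from $(\texttt{10},\texttt{01})$ to $(\texttt{01},\texttt{10})$, hence $F_D(S^\prime_2,S^\prime_3)=H_W(\texttt{1001}\oplus\texttt{0110})=4$, and with $F_D(S^\prime_1,S^\prime_2)=0$ the inequality $F_D(S^\prime_1,S^\prime_3)\leqslant F_D(S^\prime_1,S^\prime_2)+2$ fails; Formula~\ref{eq:fingerprint} then wrongly certifies $D(S_1,S_3)\geqslant 2$ for a pair at true distance $1$. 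Your proposed repair --- charging one unit per affected counter block --- proves a correct statement, but only for a per-block mismatch count (as the paper uses for position fingerprints), not for the bit-level $H_W$ that defines $F_D$ here. The paper's own proof hides exactly this by asserting that $f=count$ is ``analogous'' to $f=occ$; your instinct that this is where the real work lies is correct, and the honest conclusion is that the theorem does not hold for count fingerprints as defined unless their comparison is changed to a blockwise one.
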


\begin{proof}
\label{pr:fingerprint1}

Let us first consider the occurrence fingerprints and Hamming distance (i.e.~$D = Ham$).
For this distance, two strings must be of equal length (otherwise the distance is infinite) and let us set $|S_1| = |S_2| = n$.
The string $S_2$ can be obtained from $S_1$ by changing some of its
$k = D(S_1, S_2)$ symbols, at positions $1 \leqslant i_1 < i_2 < \ldots < i_k \leqslant n$.
Let $V_0$ be an initial copy of $S_1$ and in $k$ successive steps we transform it
into $V_1, V_2, \ldots, V_k = S_2$, by changing one of its symbols at a time.
For clarity, we shall modify the symbols in the order of their occurrence in the strings (from left to right).
We will observe how the changes affect the value of $F_D(S^\prime_1, V_j^\prime)$,
which is initially (i.e.~for $j = 0$) equal to zero.

Consider a $j$-th step, for any $1 \leqslant j \leqslant k$.
We have four cases:
\begin{enumerate}[$(i)$]
\item
both $V_{j-1}[i_j] \in \Sigma'$ and $V_j[i_j]\in \Sigma'$,
\item
both $V_{j-1}[i_j] \not \in \Sigma'$ and $V_j[i_j] \not \in \Sigma'$,
\item
$V_{j-1}[i_j] \in \Sigma'$ but $V_j[i_j] \not \in \Sigma'$,
\item
$V_{j-1}[i_j] \not \in \Sigma'$ but $V_j[i_j] \in \Sigma'$.
\end{enumerate}
Let us notice that:\\
in case $(i)$ $H_W(V_j^\prime) - H_W(V_{j-1}^\prime) \in \{-1, 0, 1\}$, yet
since $V_{j-1}[i_j] \ne V_j[i_j]$, we may obtain new mismatches at (at most) two positions
of the fingerprints, i.e.~$F_D(S^\prime_1, V_j^\prime) - F_D(S^\prime_1, V_{j-1}^\prime) \leqslant 2$,\\
in case $(ii)$ $V_j^\prime = V_{j-1}^\prime$ and thus $F_D(S^\prime_1, V_j^\prime) = F_D(S^\prime_1, V_{j-1}^\prime)$,\\
in case $(iii)$ $H_W(V_{j-1}^\prime) - H_W(V_j^\prime) \in \{0, 1\}$ and $F_D(S^\prime_1, V_j^\prime) - F_D(S^\prime_1, V_{j-1}^\prime) \leqslant 1$,\\
in case $(iv)$
$H_W(V_j^\prime) - H_W(V_{j-1}^\prime) \in \{0, 1\}$
and $F_D(S^\prime_1, V_j^\prime) - F_D(S^\prime_1, V_{j-1}^\prime) \leqslant 1$.\\

From the shown cases and by the triangle inequality we conclude that
replacing a symbol with another makes the fingerprint distance grow by at most 2.

Now we change the distance measure to the Levenshtein metric (that is, we set $D = Lev$).
Note that the set of available operations transforming one string into another is extended;
not only substitutions are allowed, but also insertions and deletions.
The overall reasoning follows the case of Hamming distance, yet we need to consider all three operations.
A single substitution in $V_j$, for a $j$-th step,
makes the fingerprint distance grow by at most 2, in the same manner as shown above for the Hamming distance.
Inserting a symbol $c$ into $V_j$ (at any position) implies one of three following cases:\\
$(i)$ $c \not \in \Sigma'$, where the fingerprint distance remains unchanged,\\
$(ii)$ $c \in \Sigma'$ and $c \in S_1$, where again the fingerprint distance does not change, or\\
$(iii)$ $c \in \Sigma'$ and $c \not \in S_1$, where the fingerprint distance grows by 1.

Deleting a symbol $c$ from $V_j$ (at any position) implies one of three following cases:\\
$(i)$ $c \not \in \Sigma'$, where the fingerprint distance remains unchanged (same as for the insert operation),\\
$(ii)$ $c \in \Sigma'$ and $c \in S_1$, where the fingerprint distance might not change
(if $V_j$ contains at least two copies of $c$) or it might grow by 1, or\\
$(iii)$ $c \in \Sigma'$ and $c \not \in S_1$, where the fingerprint distance might not change or it might decrease by 1.
Note, however, that the last case for the delete operation never occurs in an edit script transforming $S_1$ into $S_2$ using a minimum number of Levenshtein operations.

Handling $f = count$ is analogous to the presented reasoning for $f = occ$, both for the Hamming and the Levenshtein distance.
\qed
\end{proof}

\begin{theorem}
\label{th:fingerprint2}

Consider $F_D = pos$ and assume a distance function $D = Ham$.
For any two strings $S_1$ and $S_2$, with their fingerprints $S^\prime_1$ and $S^\prime_2$, respectively, we have that for any string $S_3$ such that $D(S_2, S_3) = 1$, the following relation holds: $F_D(S^\prime_1, S^\prime_3) \leqslant F_D(S^\prime_1, S^\prime_2) + 2$.
\end{theorem}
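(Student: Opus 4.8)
The plan is to mirror the argument of Theorem~\ref{th:fingerprint1}, exploiting the fact that under the Hamming distance the relation $D(S_2, S_3) = 1$ forces $S_2$ and $S_3$ to have equal length and to differ in exactly one position. Let $i$ be that position, with $S_2[i] = a$ and $S_3[i] = b$ where $a \ne b$. The central observation I would establish first is that replacing the symbol at position $i$ can alter the recorded leftmost occurrence of at most two symbols, namely the removed symbol $a$ and the inserted symbol $b$; every other symbol keeps all of its occurrences intact, so its $p$-gram in the fingerprint is unchanged.

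Next I would analyse the two affected symbols separately. For the removed symbol $a$, its stored $p$-gram (or its occurrence bit, if $a$ is one of the residual occurrence-tracked symbols) can change only when position $i$ was the leftmost occurrence of $a$ in $S_2$; otherwise an earlier occurrence of $a$ survives and the record is untouched. For the inserted symbol $b$, its stored $p$-gram (or occurrence bit) can change only when $b$ had no occurrence strictly to the left of $i$ in $S_2$, so that $i$ becomes its new leftmost occurrence; if $b$ already appeared earlier, the record is untouched. In either case at most one unit of the fingerprint -- a single $p$-gram or a single occurrence bit -- is modified for each of $a$ and $b$, with the saturation value $2^p - 1$ handled exactly as for the in-range encodings.

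Finally I would translate these fingerprint changes into changes of $F_D$. Recall that for the \emph{position} fingerprint $F_D$ is the number of mismatching $p$-grams together with the number of mismatching residual occurrence bits between the two fingerprints. Passing from $S_2^\prime$ to $S_3^\prime$ modifies at most two such units (one attributable to $a$, one to $b$), and each modified unit can flip its match/mismatch status relative to the corresponding unit of $S_1^\prime$ by at most one. Hence $F_D(S_1^\prime, S_3^\prime) - F_D(S_1^\prime, S_2^\prime) \leqslant 2$, which is exactly the claim.

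The hard part will be the bookkeeping in the second step: I must argue rigorously that exactly the symbols $a$ and $b$ -- and no others -- can have their leftmost-occurrence encoding altered, and that a change affecting $a$ or $b$ never perturbs more than one $p$-gram. The boundary behaviour of the encoding (the collapse of all positions $\geqslant 2^p - 1$ onto the all-ones value, together with the overlap between position-tracked letters and the residual occurrence bits) is where a careless count could accidentally attribute two changed units to a single symbol. Because the metric is Hamming, however, I avoid the genuinely delicate insertion and deletion cases of Theorem~\ref{th:fingerprint1}: lengths are preserved and only a clean substitution needs to be accounted for.
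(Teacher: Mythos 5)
Your proposal is correct and follows essentially the same route as the paper: the paper likewise reduces the claim to the effect of a single substitution, observes that only the two symbols involved (the one removed and the one written) can have their leftmost-occurrence $p$-gram or residual occurrence bit altered, and concludes that at most two units of the fingerprint change, each contributing at most one to the fingerprint distance. Your treatment of the leftmost-occurrence bookkeeping (a $p$-gram changes only when position $i$ was, or becomes, the leftmost occurrence of the relevant symbol) is in fact spelled out slightly more explicitly than in the paper, which simply invokes the four-case analysis of Theorem~\ref{th:fingerprint1}.
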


\begin{proof}
\label{pr:fingerprint2}
For the Hamming distance, two strings must be of equal length and let us set $|S_1| = |S_2| = n$.
Similarly to the case of occurrence and count fingerprints, the string $S_2$ can be obtained from $S_1$ by changing some of its $k = D(S_1, S_2)$ symbols.
This proof follows the same logic as presented in proof for Theorem~\ref{th:fingerprint1}.
Let us note that the only difference lies in the fact that we compare the first position of a given letter rather than its occurrence.
We deal with the same four cases depending on whether a modified letter belongs to $\Sigma^\prime$, and modifying a single letter may change: in case $(i)$ at most two $p$-grams (which describe the position of the first occurrence of a given letter), in case $(ii)$ 0 $p$-grams, in case $(iii)$ at most one $p$-gram, and in case $(iv)$ at most one $p$-gram (that is, the result of these two latter cases is equivalent).
Again, as before, the number of modified $p$-grams corresponds directly to the maximum change in fingerprint distance.
\qed
\end{proof}

\subsection{Storage}

Even though the true distance is higher than the fingerprint distance $F_D$, fingerprints can still be used in order to speed up comparisons because certain strings will be compared (and rejected) in constant time using only a fixed number of fast bitwise operations and array lookups.
As mentioned before, we consider a scenario where a number of strings is preprocessed and stored in a collection.
Since the construction of a fingerprint for the query string might be time-consuming, fingerprints are useful when the number of strings in a collection is relatively high.
As regards the space overhead incurred by the fingerprints, for a dictionary $\mathcal{D}$ containing $|\mathcal{D}|$ words it is equal to $O(|\mathcal{D}| F + \sigma)$, $F$ being the fingerprint size in bytes.
This holds since we have to store one constant size fingerprint per word together with the lookup tables which are used in order to speed up fingerprint comparison.
Let us note that this overhead is relatively small, especially when the size of each string is high (this is further discussed in the next section).

\section{Empirical study}
\label{Sec:results}

Experimental results were obtained on the machine equipped with the Intel i7-6700K processor running at $4.2$\,GHz and DDR4 memory ($3.0$\,GHz, 15-15-15-36 latency).
Source code of the tested implementation is available upon request.

The following data sets were used in order to obtain the experimental results:

\begin{itemize}
\item
\textbf{synthetic-eng} (synthetic English data): 10.0\,MB, generated based on English language letter frequencies~\cite[p.~36]{lewand2000cryptological}
\item
\textbf{iamerican-insane} (real-world English data): 5.89\,MB, American English language dictionary, available from Linux packages
\item
\textbf{urls} (real-world URL data): 90.62\,MB of web addresses, available online: \url{http://data.law.di.unimi.it/webdata/in-2004/}
\end{itemize}

The number of queries was equal to $1\,000$, and each evaluation was run $100$ times and an arithmetic mean was calculated.
All data refers to single-thread performance.
For the calculation of the Hamming distance, a regular loop which compares each consecutive character until $k$ mismatches are found was used.
It turned out that this implementation was faster than any other low-level approach (e.g.,~directly using certain processor instructions from the SSE extension set) when full compiler optimization was used.
For the Levenshtein distance, we used our own implementation based on the optimal calculation of the $2k+1$ strip and the 2-row window~\cite{gusfield1997algorithms}.
It turned out to be faster than publicly available implementations, for instance the version from the Edlib library~\cite{vsovsic2017edlib} or the SeqAn library~\cite{DoringWRR08}.
This was probably caused by the fact that we could use the most lightweight solution and thus omit certain layers of abstractions from the libraries, especially since the comparison function was invoked multiple times for relatively short strings.

Queries were extracted randomly from the dictionary and compared against this dictionary.
We have also tried distorting the queries by inserting a number of errors, each with a $50\,\%$ probability, and the results were consistent for any maximum number of introduced errors.
In the case of the English language text, we have also tested queries which consisted of the most common words extracted from a large corpus of the English language, and identical behavior was observed as in the case of queries which were sampled from the dictionary.
This test was performed in order to check whether words which would be more likely to be searched for in practice exhibit the same behavior.

Each fingerprint occupied $2$ bytes, since $1$-byte fingerprints turned out to be ineffective, and we regarded this as the optimal value with respect to a reasonable word size.
The mode length in the English dictionary was equal to $9$, which means that each fingerprint roughly incurred a $22\,\%$ storage penalty on average; however, the mode length in the URL collection was equal to $69$, which means that each fingerprint roughly incurred only a $3\,\%$ storage penalty on average.
Count fingerprints used $2$ bits per count, that is we set $b = 2$ and position fingerprints used $3$ bits per position, that is we set $p = 3$ (consult Section~\ref{Sec:our_alg} for details).
Given the selected fingerprint size of $2$ bytes ($16$ bits), these values allow for the use of $8$ letters for count fingerprints and $5$ letters for position fingerprints, with an extra occurrence bit in the latter case.

In our implementation, fingerprint comparison requires performing one bitwise operation and one array lookup, i.e.~2 constant operations in total.
We analyze the comparison time between two strings using various fingerprint types versus an explicit verification.
When the fingerprint comparison was not decisive, a verification consisting in distance calculation was performed and it contributed to the elapsed time.
The fingerprint is calculated once per query and it is then reused for the comparison with consecutive words, i.e.~we examine the situation where a single query is compared against a set (dictionary) of words.

\subsection{Results}

Figure~\ref{Fig:word_size_synth} demonstrates the results for synthetic English data, which allowed us to check a wide range of word sizes (which occur infrequently or not at all in natural language corpora) for occurrence, count, and position fingerprints.
Hamming distance was used as a similarity metric in this case.
As described in the previous section, common, mixed, and rare letter sets were selected based on English alphabet letter frequencies.
We can observe that the effectiveness of various approaches depends substantially on the word size, and the performance of letter sets also depends on the fingerprint type.
The highest speedup was provided by count fingerprints for common letters in the case of words of 18 characters and it was equal to roughly $2.5$ with respect to the naive comparison.

\begin{figure}[ht!]
    \centering
    \includegraphics[scale=0.43]{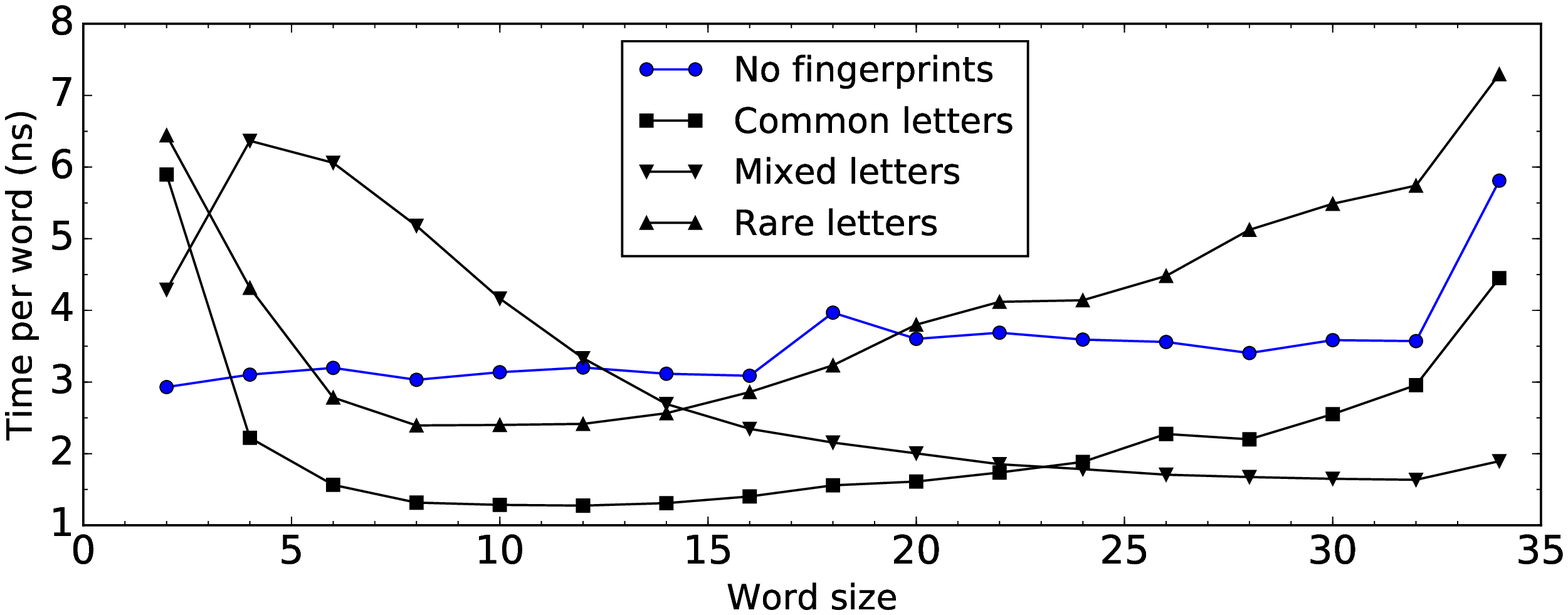}
    \includegraphics[scale=0.43]{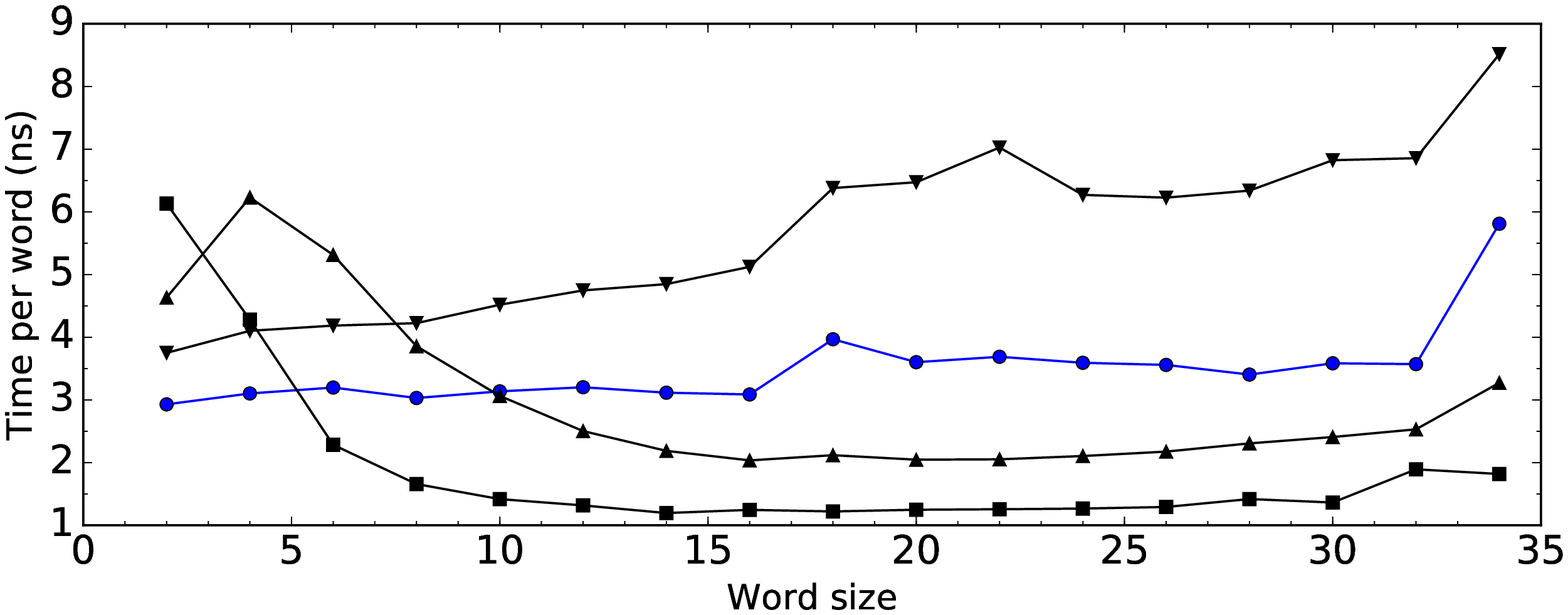}
    \includegraphics[scale=0.43]{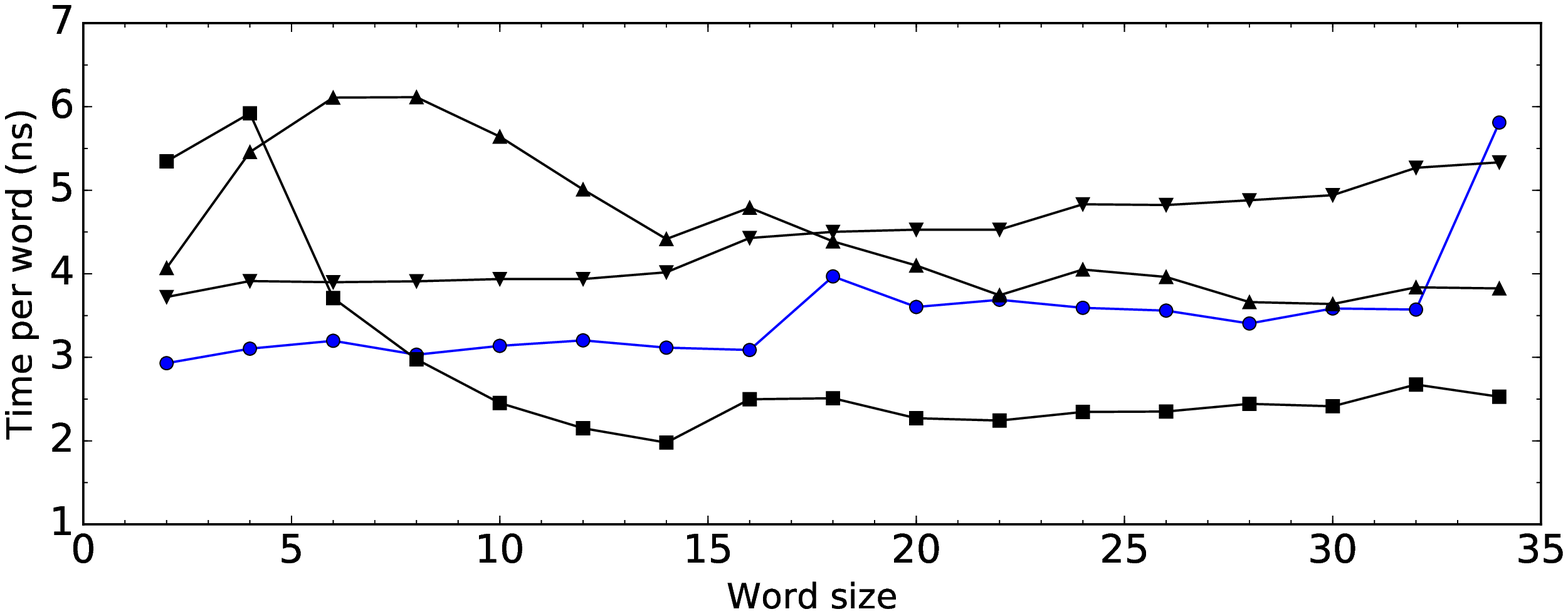}

    \caption{Comparison time vs word size for 1 mismatch (\textbf{Hamming} distance) for \textbf{synthetic} data. Words were generated over the English alphabet. Time refers to average comparison time between a single pair of words. The upper figure shows results for occurrence fingerprints, the middle figure shows results for count fingerprints, and the bottom figure shows results for position fingerprints.}
    \label{Fig:word_size_synth}
\end{figure}

In Tables~\ref{Tab:english_url_speedup}~and~\ref{Tab:english_url_speedup_lev} we present the speedup for $k = 1$ that was achieved for the English dictionary and the URL data.
Word lengths of $9$ ($90\,501$ words in total) and $69$ ($34\,044$ words in total) were used, respectively, which corresponded to mode length values in the tested dictionaries.
The speedup $S$ was calculated using the following formula: $S = T_n / T_f$, where $T_n$ refers to the average time required for a naive comparison, and $T_f$ refers to the average time required for comparison using fingerprints (e.g.,~$2.0$ means that the time required for comparison decreased twofold when fingerprints were used).
Higher speedup in the case of URL data was caused by a higher level of similarity between the data.
In particular, the data set comprised some URLs which referred to different resources that were located on the same server.
This resulted in certain words sharing a common prefix, requiring a naive algorithm to proceed with checking at least several first characters of each word.
Presented results also demonstrate a limitation of our technique, which is apparent in the case of shorter words, where using fingerprints may increase the comparison time.
Position fingerprints are not listed for the URL data, since they were completely ineffective due to multiple common prefixes between words and a high word size (almost no words were rejected).
Let us also note that Hamming distance results for the English words are consistent with those reported for synthetic English data.

\begin{table}[ht!]
	\centering
	\begin{tabular}{c|ccc}

	English & Common & Mixed & Rare \\
	\hline
	Occurrence & $1.20$ & $0.78$ & $0.51$ \\
	Occurrence halved & $1.16$ & $0.67$ & $0.56$\\
	Count & $0.86$ & $0.43$ & $0.60$\\
	Position & $0.75$ & $0.46$ & $0.67$\\
	
	\end{tabular}
	\vspace{5mm}
	
	\begin{tabular}{c|ccc}

	URLs & Common & Mixed & Rare \\
	\hline
	Occurrence & $1.49$ & $1.25$ & $2.05$\\
	Occurrence halved & $1.70$ & $1.79$ & $1.51$\\
	Count & $1.66$ & $1.65$ & $1.36$\\
	
	\end{tabular}
	\vspace{5mm}

    \caption{Speedup for various fingerprint types relative to a naive comparison for $k = 1$ using \textbf{Hamming} distance for \textbf{real-world} data. Values smaller than $1.0$ indicate that there was no speedup and the time required for comparison increased. The results in the upper table were calculated for the set of English language words of length $9$, and the results in the lower table were calculated for the set of URLs of length $69$.}
    \label{Tab:english_url_speedup}
\end{table}

In Table~\ref{Tab:english_url_rejected} we list percentages of words that were rejected for the same data sets for $k = 1$ as a hardware-independent method of comparing our approaches.
The rejection rate is naturally positively correlated with the speedup in comparison time.
Table~\ref{Tab:english_url_construction} presents the construction speed.
Construction time included the creation of all lookup tables which accompany the fingerprints in order to speed up their comparisons, as well as storing the fingerprints in a dynamic container.

\begin{table}[ht!]
	\centering
	\begin{tabular}{p{5em} | ccc}

	\centering English & Common & Mixed & Rare \\
	\hline
	\centering Occurrence & $4.78$ & $2.04$ & $0.95$ \\
	\centering Count & $2.39$ & $0.82$ & $0.25$\\
	
	\end{tabular}
	\vspace{5mm}
	
	\begin{tabular}{p{5.3em} | ccc}

	\centering URLs & Common & Mixed & Rare \\
	\hline
	\centering Occurrence & $3.61$ & $2.31$ & $10.38$\\
	\centering Count & $5.36$ & $5.42$ & $\hphantom{1}3.18$\\
	
	\end{tabular}
	\vspace{5mm}

    \caption{Speedup for various fingerprint types relative to a naive comparison for $k = 1$ using \textbf{Levenshtein} distance for \textbf{real-world} data. Values smaller than $1.0$ indicate that there was no speedup and the time required for comparison increased. The results in the upper table were calculated for the set of English language words of length $9$, and the results in the lower table were calculated for the set of URLs of length $69$.}
    \label{Tab:english_url_speedup_lev}
\end{table}
\setlength{\tabcolsep}{6pt}

\begin{table}[ht!]
	\centering
	\begin{tabular}{c|ccc}

	English & Common & Mixed & Rare \\
	\hline
	Occurrence (Ham, Lev) & $98.41\,\%$ & $91.91\,\%$ & $78.72\,\%$\\
	Occurrence halved (Ham) & $97.84\,\%$ & $87.90\,\%$ & $\hphantom{7}9.00\,\%$\\
	Count (Ham, Lev) & $93.60\,\%$ & $75.31\,\%$ & $\hphantom{7}5.94\,\%$\\
	Position (Ham) & $90.72\,\%$ & $54.93\,\%$ & $\hphantom{7}0.36\,\%$\\
	
	\end{tabular}
	\vspace{5mm}
	
	\begin{tabular}{c|ccc}

	URLs & Common & Mixed & Rare \\
	\hline
	Occurrence (Ham, Lev) & $70.79\,\%$ & $54.75\,\%$ & $89.31\,\%$\\
    Occurrence halved (Ham) & $79.73\,\%$ & $83.99\,\%$ & $72.71\,\%$\\
 	Count (Ham, Lev) & $80.34\,\%$ & $80.29\,\%$ & $66.76\,\%$\\
	
	\end{tabular}
	\vspace{5mm}

    \caption{Percentage of \textbf{rejected words} for various fingerprint types for $k = 1$ for \textbf{real-world} data. Rejection means that the true error was determined to be more than $k$ based only on a fingerprint comparison. Results in the upper table were calculated for the set of English language words of length $9$, and results in the lower table were calculated for the set of URLs of length $69$.}
    \label{Tab:english_url_rejected}
\end{table}

In general, the choice of the optimal strategy, viz.~fingerprint type, letters data set, and how many bits are used per count or position in a fingerprint, depends chiefly on the input data.
Larger fingerprints would allow for obtaining a better rejection rate, but this would come at a cost of increased space usage.
Once the rejection rate is close to the optimal $100\,\%$, larger fingerprints would provide only a negligible reduction in processing time.

In our case, the simplest approach, that is occurrence fingerprints with common letters, seemed to offer the best performance.
Still, we would like to point out that a practical evaluation on a specific data set would be advised in a real-world scenario.

\setlength{\tabcolsep}{10pt}
\begin{table}
	\centering

	\begin{tabular}{c|ccc}

	English & Common & Mixed & Rare \\
	\hline
	Occurrence & $274.36$ & $195.13$ & $214.33$\\
    Occurrence halved & $192.34$ & $193.32$ & $301.14$\\
 	Count & $138.43$ & $155.09$ & $291.35$\\
	Position & $172.00$ & $205.21$ & $256.18$\\	
	
	\end{tabular}
	\vspace{5mm}
	
	\begin{tabular}{c|ccc}

	URLs & Common & Mixed & Rare \\
	\hline
	Occurrence & $393.86$ & $380.34$ & $388.07$\\
    Occurrence halved & $379.77$ & $374.92$ & $397.82$\\
 	Count & $324.05$ & $356.17$ & $415.77$\\
	
	\end{tabular}
	\vspace{5mm}

    \caption{\textbf{Construction speed} given in MB/s for various fingerprint types for \textbf{real-world} data. Results in the upper table were calculated for the set of English language words of length $9$, and results in the lower table were calculated for the set of URLs of length $69$.}
\label{Tab:english_url_construction}
\end{table}

\section{Conclusions}
\label{Sec:future}

We have evaluated fingerprints in the context of dictionary matching.
Still, we would like to emphasize the fact that fingerprints are not a data structure in itself, rather, they are a string augmentation technique which we believe may prove useful in various applications.
For instance, they can be used in any data structure which performs multiple internal approximate string comparisons, providing substantial speedups at a modest increase in the occupied space.
In particular, for longer strings such as URL sequences the space overhead can be considered negligible.

Fingerprints take advantage of the letter distribution, and for this reason they were not effective for strings sampled over the alphabet with a uniformly random distribution.
They are also not recommended for the DNA data due to the small size of the alphabet and a long average word size.
These two combined properties result in a scenario where each word almost surely contains multiple occurrences of each possible letter.

In the future, we would like to extend the notion of a fingerprint by encoding information regarding not only single symbol distributions, but rather $q$-gram distributions.
The set of $q$-grams could be determined either heuristically or using an exhaustive search, and their use might provide speedup for any data set (possibly including DNA sequences).
We believe that it may be also beneficial for processing larger $k$ values.
Another possibility lies in combining different fingerprint types for a single word in order to further decrease comparison time at the cost of increased space usage.

\bibliographystyle{abbrv}
\bibliography{biblio}

\end{document}